\documentclass[sigconf]{acmart}

\usepackage{bm}
\usepackage{amsmath}

\AtBeginDocument{%
  \providecommand\BibTeX{{%
    \normalfont B\kern-0.5em{\scshape i\kern-0.25em b}\kern-0.8em\TeX}}}





\begin{document}

\title{An Analysis of Selection Bias Issue for Online Advertising}

\author{Shinya Suzumura}
\email{ssuzumur@yahoo-corp.jp}
\affiliation{%
  \institution{Yahoo Japan Corporation}
  \city{Tokyo}
  \country{JP}
}

\author{Hitoshi Abe}
\email{hitoabe@yahoo-corp.jp}
\affiliation{%
  \institution{Yahoo Japan Corporation}
  \city{Tokyo}
  \country{JP}
}

\begin{abstract}
  In online advertising, a set of potential advertisements can be ranked by a certain auction system
  where usually the top-1 advertisement would be selected and displayed at an advertising space.
  In this paper, we show a selection bias issue that is present in an auction system.
  We analyze that the selection bias destroy \emph{truthfulness} of the auction,
  which implies that the buyers (advertisers) on the auction can not maximize their profits.
  Although selection bias is well known in the field of statistics and there are lot of studies for it,
  our main contribution is to combine the theoretical analysis of the bias with the auction mechanism.
  In our experiment using online A/B testing,
  we evaluate the selection bias on an auction system whose ranking score is the function of predicted CTR (click through rate) of advertisement.
  The experiment showed that the selection bias is drastically reduced by using a multi-task learning which learns the data for all advertisements.
\end{abstract}

\keywords{Selection bias, Generalized second price auction, Online advertising, Display advertising}

\maketitle

\section{Introduction}
\label{sec:introduction}
Online advertising \cite{he2014practical,zhu2017optimized,jin2018real,grigas2017profit,vasile2017cost}
is an effective way to maximize advertiser utilities which are related to ROI (return on investment), ROAS (return on advertising spend), and so on.
The objective of advertising depends on the advertisers:
one may want to acquire new customers,
while others may want to boost brand recognition of their products.
Online advertising can be regarded as a resource allocation problem
\cite{johari2004efficiency,katoh1998resource,shi2015faster}
or budget allocation problem
\cite{grigas2017profit,staib2017robust,hatano2016adaptive,maehara2015budget,soma2014optimal,zhang2014optimal,karande2013optimizing,han2020contextual,abrams2006revenue}.
The resources are activities of users who visit a media which is depicted in \figurename~\ref{fig:yahoo_news}, for instance.
Each advertiser wants to effectively allocate their budget for each of the user activities.
\begin{figure}[t]
  \centering
  \includegraphics[width=\linewidth]{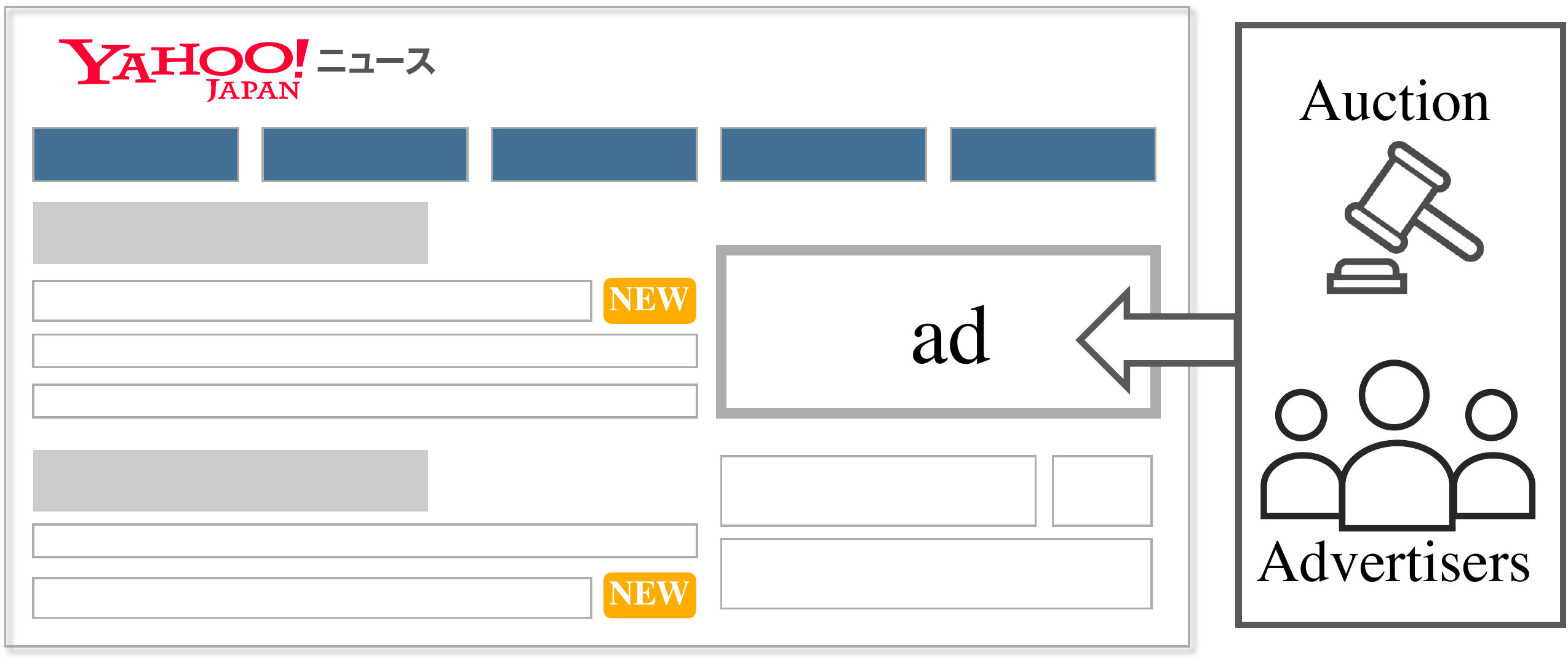}
  \caption{
    An example of media which has an advertising space.
    “ad” is an advertisement selected by a certain auction.
  }
  \label{fig:yahoo_news}
\end{figure}
Note that advertising is competitive since the resources are finite.
In general, the winner of competitors is determined by a certain auction.
Here we explain GSP (generalized second-price auction) \cite{edelman2007internet}
that is still one of the most standard auctions in online advertising \cite{karande2013optimizing}.

\subsection*{GSP (Generalized Second-Price Auction)}
Here we provide a short description of GSP.
In this paper, we focus on CPC (cost per click) advertising in which the advertisers set the (sealed) bid price\footnote{In CPC advertising, the bid price is a price that advertiser is willing to pay for user click.
Moreover, auction in online advertising is generally a sealed-bid auction, which means that each of the advertisers can not know the bid prices of the other advertisers.} for user click.
Let us consider $m$ advertisements which are ranked by the following ranking scores
\begin{align}
  \label{eq:ranking_scores}
  \text{Bid}_{(1)} \text{CTR}_{(1)}
  \ge
  \text{Bid}_{(2)} \text{CTR}_{(2)}
  \ge \cdots \ge
  \text{Bid}_{(m)} \text{CTR}_{(m)},
\end{align}
where the subscripted number with parentheses is the rank of ad (advertisement),
$\text{Bid}_{(i)}$ and $\text{CTR}_{(i)}$ are the bid price and click through rate\footnote{CTR is a probability of ad click for a given user.} for the rank $i$-th ad, respectively.
In GSP,
the original bid price is adjusted by the ranking score of the second rank ad,
i.e.,
\begin{align}
  \label{eq:cpc_in_2nd_price_auction}
  \text{CPC}_{(1)} =
    \frac{
      \text{Bid}_{(2)} \text{CTR}_{(2)}
    }{
      \text{CTR}_{(1)}
    }
    \le \text{Bid}_{(1)},
\end{align}
where $\text{CPC}_{(1)}$ is cost per click (or equally, spend for a click) of the advertiser who has the highest rank ad.
CPC computation (\ref{eq:cpc_in_2nd_price_auction}) indicates that
the expected cost for the highest rank ad is
\begin{align*}
  \text{CPC}_{(1)} \text{CTR}_{(1)} = \text{Bid}_{(2)} \text{CTR}_{(2)},
\end{align*}
which only depends on the second rank ad,
and thus
$\text{CPC}_{(1)}$ is equal to the original bid price $\text{Bid}_{(1)}$ if and only if $\text{Bid}_{(1)} \text{CTR}_{(1)} = \text{Bid}_{(2)} \text{CTR}_{(2)}$.
This implies that there is no need to decrease the original bid price manually.

In this paper, we consier the single item auction where only the top-1 ad is selected and displayed at a position within the media\footnote{If we have multiple positions for advertising in the media, then we use the single item auction in each of the positions.}.
Under the single item setting, GSP is inherently the same as Vickrey–Clarke–Groves (VCG) auction
which is a \emph{truthful} auction \cite{aggarwal2006truthful,levin2004auction} under the assumption that advertisers have a large amount of budgets,
this means that the best strategy is to set bid price as the maximum price that advertisers are willing to pay for user click.
Although the optimal bidding strategy to maximize advertiser utilities is simple in truthful auctions,
the difficulty is that CTR is unknown in practice and
CPC computation (\ref{eq:cpc_in_2nd_price_auction}) is sensitive to the error of estimated CTR (see \cite{feng2019online} for a regret analysis for the estimation error).
Furthermore, we suffer from a selection bias issue \cite{heckman1979sample}
in CPC computation as follows.

\subsection*{Selection Bias Issue}
Since CTR is unknown, we need to estimate or predict it in practice.
Let $\widehat{\text{CTR}}$ be an estimated CTR, then (\ref{eq:ranking_scores}) can be rewritten as
\begin{align}
  \label{eq:ranking_scores_with_estimated_CTR}
  \text{Bid}_{(1)^\prime} \widehat{\text{CTR}}_{(1)^\prime}
  \ge
  \text{Bid}_{(2)^\prime} \widehat{\text{CTR}}_{(2)^\prime}
  \ge \cdots \ge
  \text{Bid}_{(m)^\prime} \widehat{\text{CTR}}_{(m)^\prime},
\end{align}
where the subscript is the rank using estimated CTRs.
The above ordered statistics are affected by a selection bias which is written by
\begin{align}
  \label{eq:selection_bias}
  \mathbb{E}\left[
    \widehat{\text{CTR}}_{(i)^\prime}
  \right]
  = b_{(i)^\prime} \text{CTR}_{(i)^\prime},
  ~~\text{for~any~fixed}~ i \in \{1,\cdots,m\},
\end{align}
where $b_{(i)^\prime} \in \mathbb{R}$ is the selection bias.
In this paper, we reveal that the selection biases hold the property $b_{(1)^\prime} \ge b_{(2)^\prime}$ in a certain assumption,
and thus (\ref{eq:cpc_in_2nd_price_auction}) with the estimated CTRs would be over discounted as follows:
\begin{align}
  \label{eq:cpc_with_selection_bias}
  \frac{
    \text{Bid}_{(2)^\prime} \mathbb{E}\left[ \widehat{\text{CTR}}_{(2)^\prime} \right]
  }{
    \mathbb{E}\left[ \widehat{\text{CTR}}_{(1)^\prime} \right]
  } =
  \frac{
    \text{Bid}_{(2)^\prime} b_{(2)^\prime} \text{CTR}_{(2)^\prime}
  }{
    b_{(1)^\prime} \text{CTR}_{(1)^\prime}
  }.
\end{align}
Since each selection bias $b_{(i)^\prime}$ depends on the bid prices of all advertisements on the auction (see \S\ref{sec:theoretical_analysis} for more details),
we conjuncture that the auction would be no longer truthful\footnote{Since CPC is over discounted by controlling the bid price,
we do not know the optimal bidding strategy under the selection bias issue.}.
To the best of our knowledge, there are no truthful auctions under the selection bias issue.
Thus, we wish to compute the unbiased estimator such that
$\mathbb{E}\left[\widehat{\text{CTR}}_{(i)^\prime}\right] = \text{CTR}_{(i)^\prime}$.
The details of this problem is discussed in \S\ref{sec:theoretical_analysis}.

\subsection*{Summary of Our Main Contribution}
Our main contribution is to provide a theoretical analysis of the selection bias in (\ref{eq:selection_bias}).
It is worth mentioning that the biases in (\ref{eq:cpc_with_selection_bias}) are canceled
if $b_{(1)^\prime}$ and $b_{(2)^\prime}$ are the same.
We reveal that the selection biases $b_{(1)^\prime}$ and $b_{(2)^\prime}$ are NOT the same,
which means that GSP using the estimated CTRs is strictly affected the selection bias issue.

In our experiment using online A/B testing,
we evaluate the selection bias on the auction and we show that it is drastically reduced by using a multi-task learning that learns the data for all advertisements.

\section{Related Works}
\label{sec:related_works}
Although we focus on GSP under the single item setting in this paper,
another common auction is first-price auction whose CPC equals to the original bid price.
However, it is typically unstable and high variant --
in the paper of \cite{edelman2007strategic},
it was shown that the transition of CPC in a first-price based auction has a distinctive “sawtooth” pattern.
Furthermore,
we can not ignore the selection bias issue even if we consider a first-price based auction,
because we need to predict the minimum winning price of the auction to maximize advertiser utilities;
this is inherently the same as the prediction of the second-price.
The selection bias issue for the winning price prediction can be regarded as \emph{the Winner’s Curse} \cite{lee2018winner,xu2011bayesian,zollner2007overcoming}.

In recent studies \cite{lee2014exact,lee2016exact,tibshirani2016exact} for statistical inference or hypothesis testing,
one way to correct the selection bias is to consider the maximum likelihood estimator with the conditional distribution of ordered statistics.
In the following, we consider the conditional distribution of the ordered statistic
$\widehat{\text{CTR}}_{(1)^\prime}$
in (\ref{eq:ranking_scores_with_estimated_CTR}).
Let $\widehat{\text{CTR}}_{i}$ be the estimated CTR for the $i$-th ad.
For any $j_k \in \{1,\cdots,m\} \setminus \{i\}$,
the event that $\widehat{\text{CTR}}_{i}$ equals to $\widehat{\text{CTR}}_{(1)^\prime}$ can be written as
\begin{align*}
  \text{Bid}_{i} \widehat{\text{CTR}}_{i}
  &\ge
  \text{Bid}_{j_1} \widehat{\text{CTR}}_{j_1}, \\
  \text{Bid}_{i} \widehat{\text{CTR}}_{i}
  &\ge
  \text{Bid}_{j_2} \widehat{\text{CTR}}_{j_2}, \\
  &\vdots \\
  \text{Bid}_{i} \widehat{\text{CTR}}_{i}
  &\ge
  \text{Bid}_{j_{m-1}} \widehat{\text{CTR}}_{j_{m-1}},
\end{align*}
or equally
\begin{align*}
  \left[
    \!\!
    \begin{tabular}{ccccc}
      $\text{Bid}_{i}$ & $-\text{Bid}_{j_1}$ & $0$ & $\cdots$ & $0$ \\
      $\text{Bid}_{i}$ & $0$ & $-\text{Bid}_{j_2}$ & $\cdots$ & $0$ \\
      && $\vdots$ && \\
      $\text{Bid}_{i}$ & $0$ & 0 & $\cdots$ & $-\text{Bid}_{j_{m-1}}$
    \end{tabular}
    \!\!
  \right]
  \left[
    \begin{tabular}{l}
      $\widehat{\text{CTR}}_{i}$ \\
      $\widehat{\text{CTR}}_{j_1}$ \\
      $\vdots$ \\
      $\widehat{\text{CTR}}_{j_{m-1}}$
    \end{tabular}
    \!\!
  \right]
  \ge
  \left[
    \!\!
    \begin{tabular}{c}
      $0$ \\
      $0$ \\
      $\vdots$ \\
      $0$
    \end{tabular}
    \!\!
  \right].
\end{align*}
The above linear constraints are summarized in the form of $A\bm{y} \ge \bm{0}$
where $\bm{y}$ is $m$-dimensional vector for the (non-ordered) estimated CTRs and $A$ is a fixed matrix which is independent of $\bm{y}$,
and we define $y_1 = \widehat{\text{CTR}}_{i}$.
By using this,
the event that $\widehat{\text{CTR}}_{i} = \widehat{\text{CTR}}_{(1)^\prime}$ can be rewritten as $A\bm{y} \ge \bm{0}$.
According to the recent study of \cite{lee2016exact},
if the vector $\bm{y}$ is generated from a multivariate normal distribution,
then the conditional distribution of ordered statistic $\widehat{\text{CTR}}_{(1)^\prime}$ is computable
because it is the distribution of $y_1$ on the convex polytope $A\bm{y} \ge \bm{0}$.
Therefore, the unbiased estimator for the unknown true $\text{CTR}_{i}$ in the case that
$\widehat{\text{CTR}}_{i} = \widehat{\text{CTR}}_{(1)^\prime}$
is the conditional maximum likelihood estimator with the distribution
$\text{P}(y_1 | A\bm{y} \ge \bm{0})$.

In general, we have no prior knowledge of $\bm{y}$, which means that the conditional distribution can not be analytically computed.
In order to overcome this problem, a sampling based approach was proposed in \cite{nie2018adaptively};
the conditional maximum likelihood estimator is approximately computed by using Gumbel-max trick \cite{gumbel1954statistical}
where the true $\text{CTR}_{i}$ is estimated by adaptively collecting data on a certain multi-armed bandit algorithm.
The selection bias has been also studied in causal inference \cite{pearl2009causal},
and the basic idea of bias correction is to use IPW (inverse probability weighting) \cite{robins1994estimation}
or so called \emph{doubly robust estimator} \cite{bang2005doubly,saito2020doubly}.
Our main contribution is inherently different from the above studies.
We analyze the relation of selection biases of multiple ordered statistics,
and we show that the effect of the selection biases is not canceled in GSP
even if the estimated CTRs are originally unbiased.

\section{Main Contribution}
\label{sec:theoretical_analysis}
In this section,
we first show a synthetic experiment about the selection bias explained in (\ref{eq:selection_bias}).
Second, we analyze the selection bias.
The experiment links to intuitive understanding of our analysis.

\subsection*{Simulation Study}
\label{subsec:simulation_study}
We consider the situation where there are two advertisements whose true CTRs are $\text{CTR}_1$ and $\text{CTR}_2$, respectively.
Let $\widehat{\text{CTR}}_i, ~ i \in \{1,2\}$ be the estimated CTRs as
\begin{align}
  \label{eq:CTR_generation_process}
  \widehat{\text{CTR}}_i = \frac{c_i}{n_i},~~
  c_i \sim \text{Bin}(\text{CTR}_i, n_i),
\end{align}
where $\text{Bin}(\cdot)$ is Binomial distribution,
$c_i$ is the number of observed ad clicks,
and
$n_i$ is the number of ad impressions\footnote{The number of ad impressions is the number of times when ad is displayed within the media.}.
In this section, we assume that $n_i$ is a fixed variable,
then $\widehat{\text{CTR}}_i$ is the unbiased estimator such that $\mathbb{E}\left[ \widehat{\text{CTR}}_i \right] = \text{CTR}_i$.
For simplicity, we assume that the bid prices of both advertisements are set to ones which implies that CPC defined in (\ref{eq:cpc_in_2nd_price_auction}) must be $\text{CTR}_{(2)} / \text{CTR}_{(1)}$.
In the following simulation, we check whether the observed
$\text{CPC}_{(1)^\prime} = \widehat{\text{CTR}}_{(2)^\prime} / \widehat{\text{CTR}}_{(1)^\prime}$ is equal to the expected
$\text{CPC}_{(1)} = \text{CTR}_{(2)} / \text{CTR}_{(1)}$.

We compute the observed $\text{CPC}_{(1)^\prime}$
when the parameters are summarized in Table~\ref{tb:parameter_settings_simulation}.
Table~\ref{tb:observed_CPC_simulation} shows the average of observed CPC over independent 20000 simulations in each parameter setting.
This result indicates that the observed $\text{CPC}_{(1)^\prime}$ tends to be over discounted
when the number of ad impressions is small or $\text{CTR}_1,\text{CTR}_2$ are competitive.
This means that we suffer from the selection bias issue even if the estimated CTRs are originally unbiased.
\begin{table}[b]
  \caption{Several parameter settings used in (\ref{eq:CTR_generation_process}).}
  \label{tb:parameter_settings_simulation}
  \centering
  \begin{tabular}{c|c|l}
  \hline
  setting & $n_1=n_2$ & \multicolumn{1}{c}{true CTRs} \\ \hline
  (a) & $\phantom{0}5000$ & $\text{CTR}_1=0.05,\text{CTR}_2=0.05$ \\
  (b) & $\phantom{0}5000$ & $\text{CTR}_1=0.05,\text{CTR}_2=0.045$ \\
  (c) & $\phantom{0}5000$ & $\text{CTR}_1=0.05,\text{CTR}_2=0.04$ \\ \hline
  (d) & $20000$ & $\text{CTR}_1=0.05,\text{CTR}_2=0.05$ \\
  (e) & $20000$ & $\text{CTR}_1=0.05,\text{CTR}_2=0.045$ \\
  (f) & $20000$ & $\text{CTR}_1=0.05,\text{CTR}_2=0.04$ \\ \hline
  \end{tabular}
\end{table}
\begin{table}[b]
  \caption{
  The average of observed $\text{CPC}_{(1)^\prime}$ over 20000 trials, which is denoted as $\overline{\text{CPC}}_{(1)^\prime}$.
  The expected $\text{CPC}_{(1)}$ using true CTRs.
  }
  \label{tb:observed_CPC_simulation}
  \centering
  \begin{tabular}{c|c|c|c}
  \hline
  setting & $\text{CPC}_{(1)}$ & $\overline{\text{CPC}}_{(1)^\prime}$ & $\overline{\text{CPC}}_{(1)^\prime} / \text{CPC}_{(1)}$ \\ \hline
  (a) & 1.0 & 0.934 & 0.934 \\
  (b) & 0.9 & 0.894 & 0.993 \\
  (c) & 0.8 & 0.803 & 1.00 \\ \hline
  (d) & 1.0 & 0.966 & 0.966 \\
  (e) & 0.9 & 0.900 & 1.00 \\
  (f) & 0.8 & 0.800 & 1.00 \\ \hline
  \end{tabular}
\end{table}
\figurename~\ref{fig:sampling_distribution_CPC} shows the sampling distributions of observed
$\text{CPC}_{(1)^\prime}$ in each parameter setting:
these distributions are seriously skewed in the settings (a),(d), while they are almost symmetric in the settings (c),(f).
\figurename~\ref{fig:sampling_distribution_ordered_statistics} shows the sampling distributions of ordered statistics
$\widehat{\text{CTR}}_{(1)^\prime}$ and $\widehat{\text{CTR}}_{(2)^\prime}$:
these distributions are overlapped in the setting (a), while they are almost non-overlapped in the setting (f).

These phenomena are interpretable as follows: it is because an overestimated CTR tends to be
$\widehat{\text{CTR}}_{(1)^\prime}$
rather than
$\widehat{\text{CTR}}_{(2)^\prime}$.
In other words, if the ranking is deterministic as
the probability
$\text{P}(\widehat{\text{CTR}}_1 \ge \widehat{\text{CTR}}_2) = 1$,
then
$\widehat{\text{CTR}}_{(1)^\prime}$ reduces to $\widehat{\text{CTR}}_{1}$
regardless of whether $\widehat{\text{CTR}}_{1}$ is overestimated or not --
this means that $\widehat{\text{CTR}}_{(1)^\prime}$ would be the unbiased estimator such that
$\mathbb{E}\left[ \widehat{\text{CTR}}_{(1)^\prime} \right] = \mathbb{E}\left[ \widehat{\text{CTR}}_1 \right] = \text{CTR}_1$.
\begin{figure}[!t]
  \centering
  \begin{tabular}{ccc}
    \includegraphics[width=0.475\linewidth]{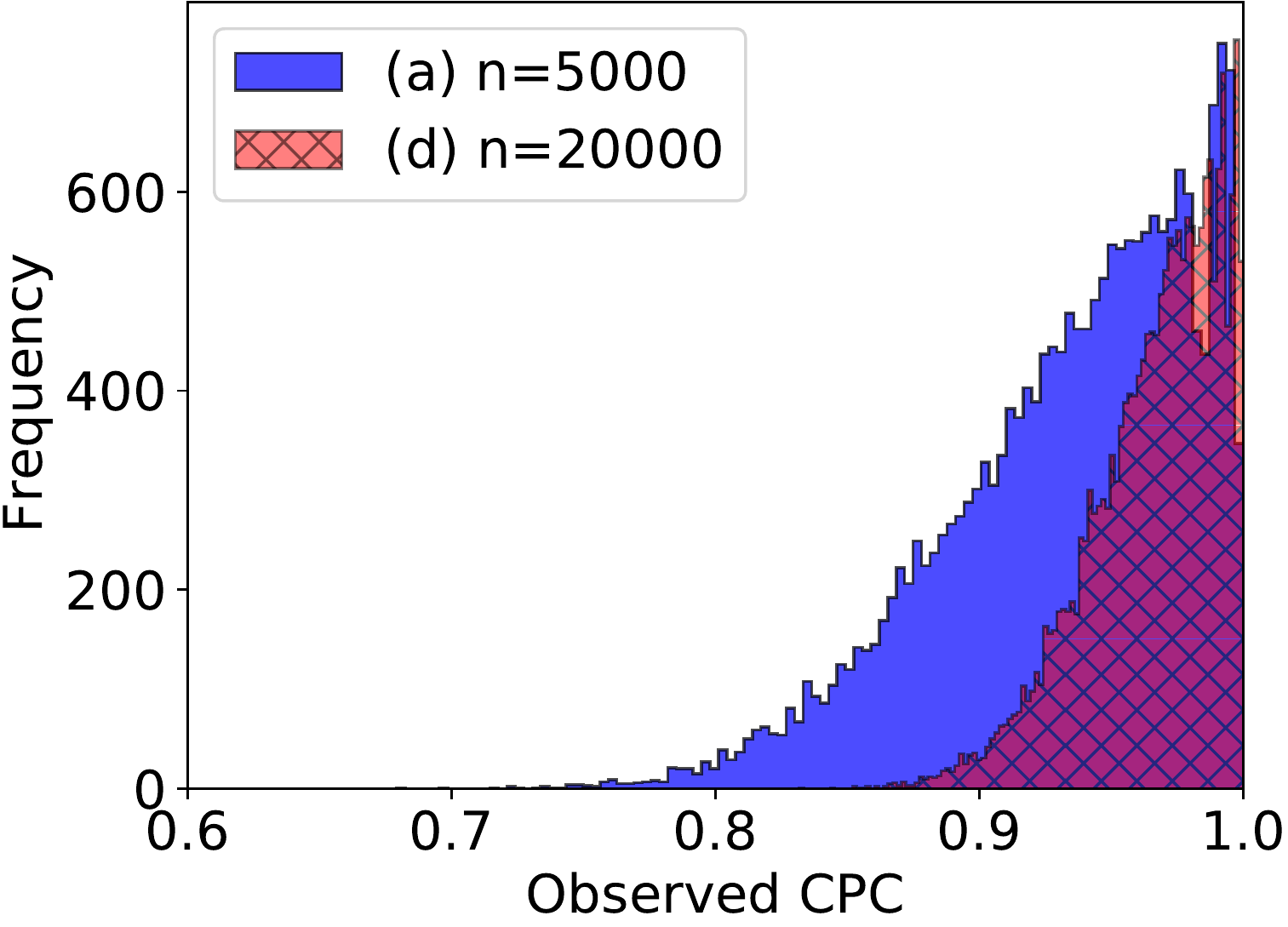} &
    \includegraphics[width=0.475\linewidth]{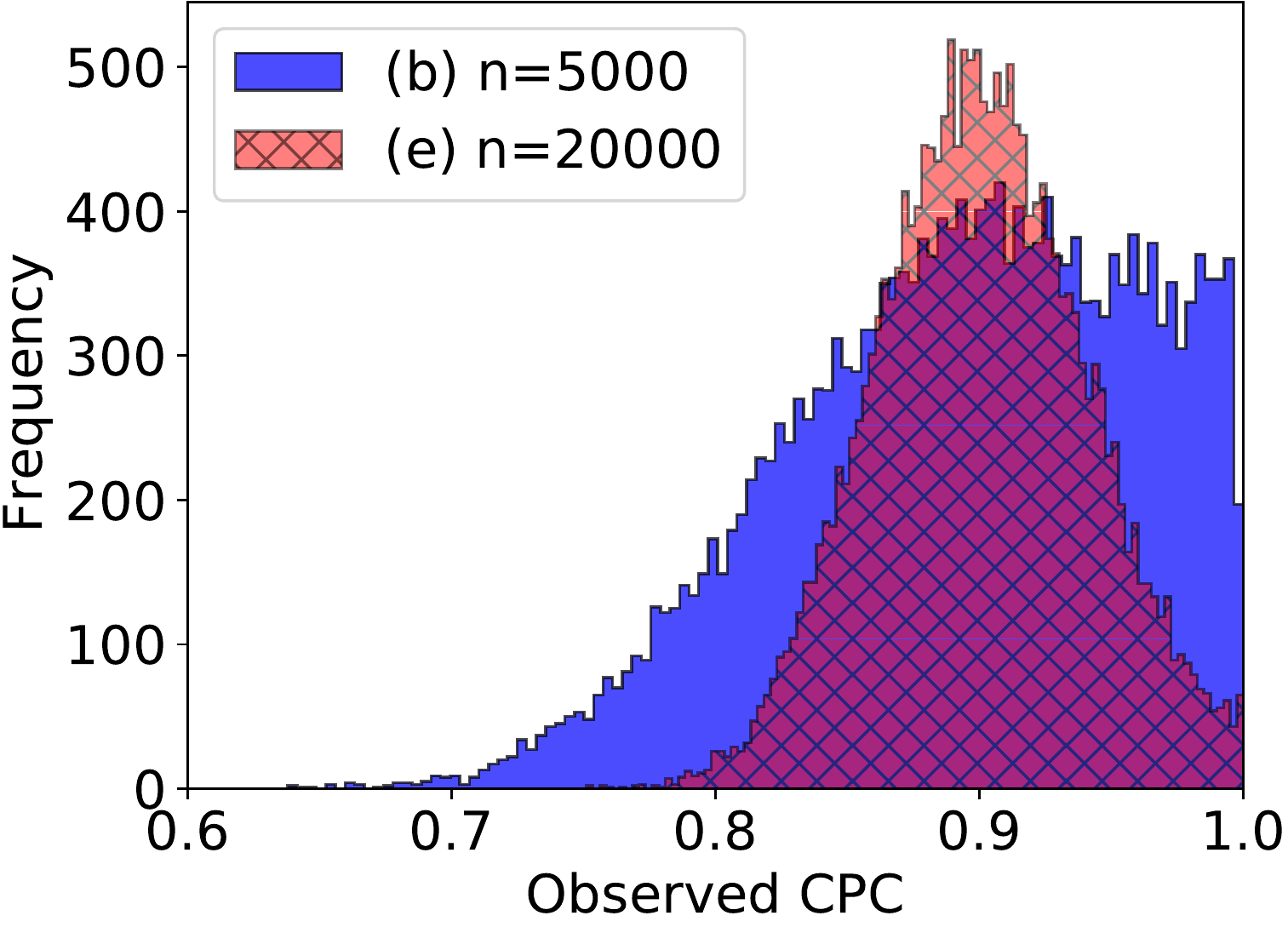} \\
    parameter settings: (a),(d) & parameter settings: (b),(e)
  \end{tabular}
  \includegraphics[width=0.475\linewidth]{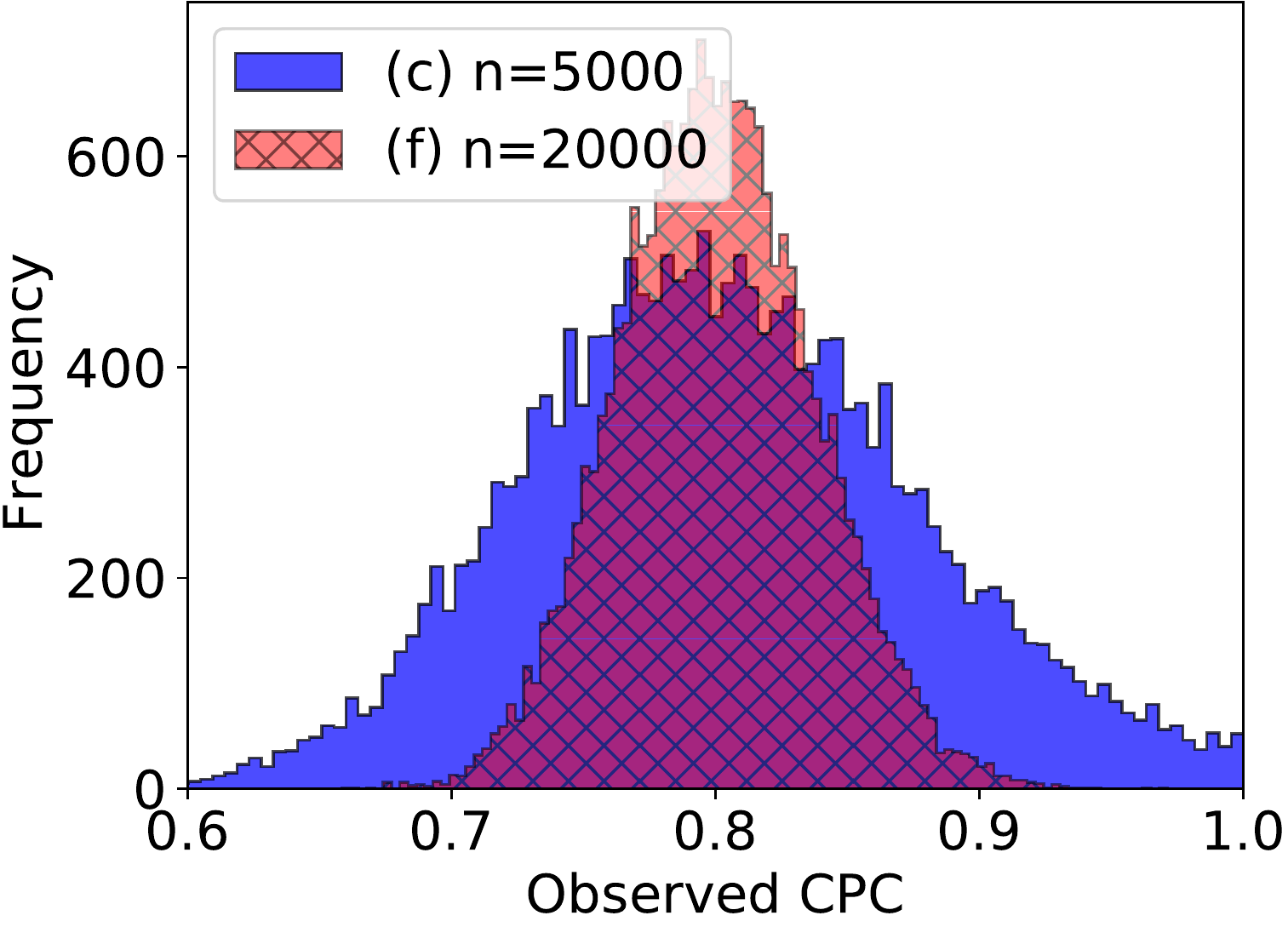} \\
  parameter settings: (c),(f)
  \caption{
    The sampling distributions of observed CPCs,
    which are simulated by independent 20000 trials.
  }
  \label{fig:sampling_distribution_CPC}
\end{figure}
\begin{figure}[!t]
  \centering
  \begin{tabular}{ccc}
    \includegraphics[width=0.475\linewidth]{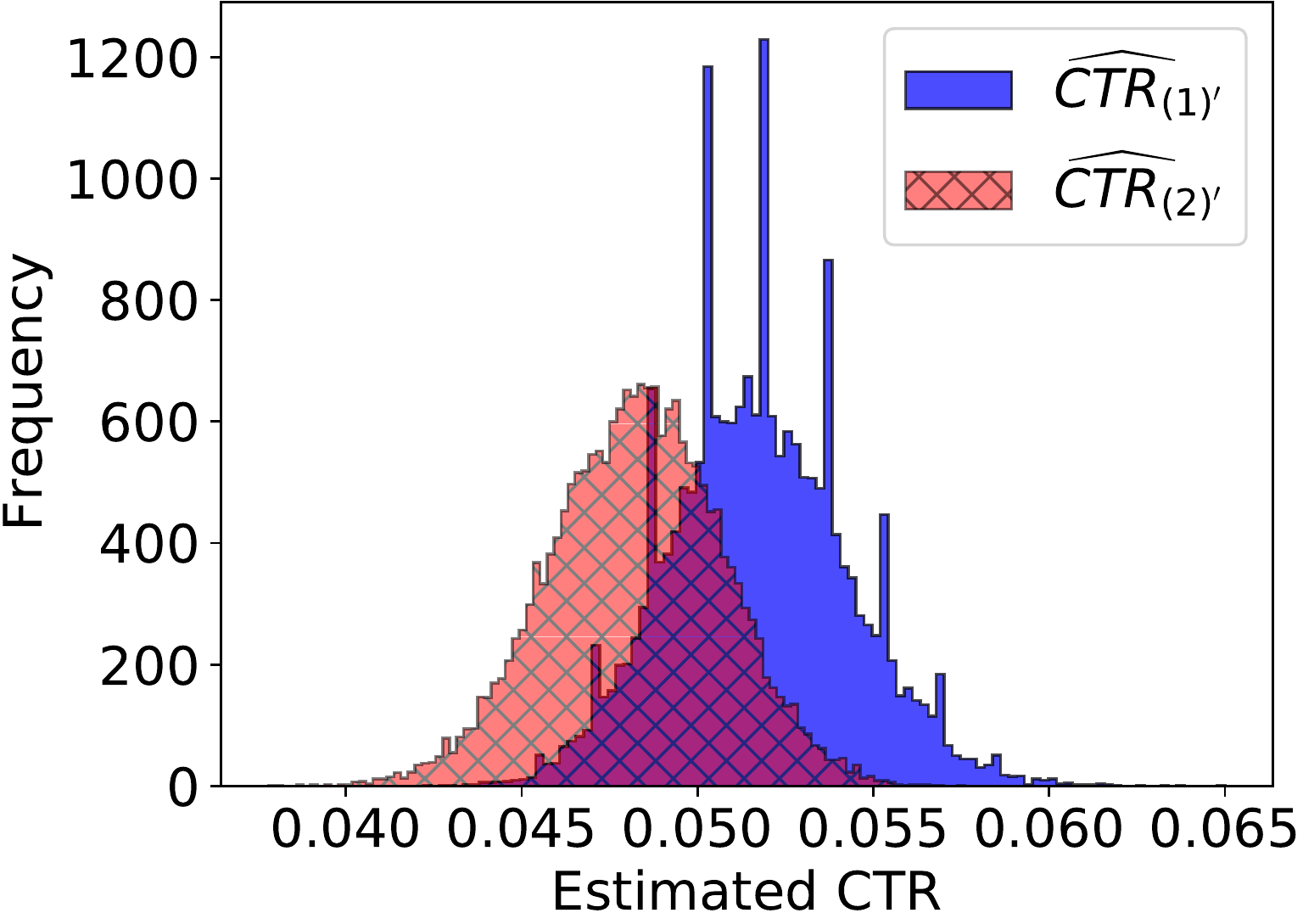} &
    \includegraphics[width=0.475\linewidth]{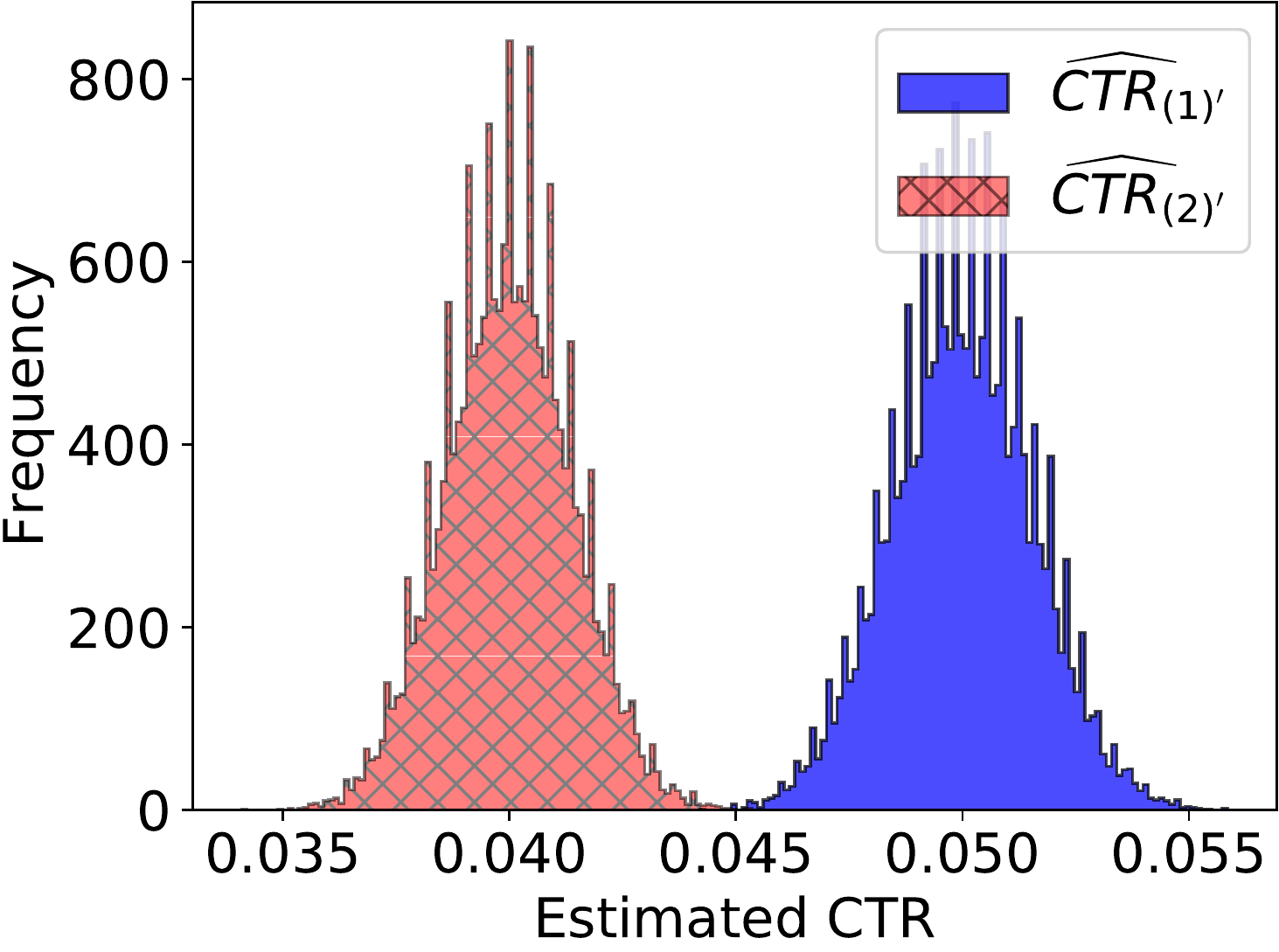} \\
    parameter setting (a) & parameter setting (f)
  \end{tabular}
  \caption{
    The sampling distributions of ordered statistics $\widehat{\text{CTR}}_{(1)^\prime}$ and $\widehat{\text{CTR}}_{(2)^\prime}$,
    which are simulated by independent 20000 trials.
  }
  \label{fig:sampling_distribution_ordered_statistics}
\end{figure}

\subsection*{Theoretical Analysis}
Here we show that the selection bias depends on the rank of ad
which implies that the biases $b_{(1)^\prime}$ and $b_{(2)^\prime}$ are not canceled in (\ref{eq:cpc_with_selection_bias}).
In the remainder, we denote $r_i = r(\text{ad}_i)$ and $s_i = \text{Bid}_{i} \widehat{\text{CTR}}_{i}$
as the rank and ranking score of the $i$-th ad, respectively.
Since $\widehat{\text{CTR}}_{i}$ is a random variable,
we denote $R_i,S_i$ as random variables for $r_i,s_i$ where the small numbers are observed values.
Moreover, let us denote $\text{P}_{S_i}(s)$ as PDF (probability density function) with respect to the ranking score for the $i$-th ad.
\begin{definition}
  We call the following functions $f(x), g(x)$ as \emph{splittable} at $v \in \mathbb{R}$,
  \begin{align}
    \label{eq:splittable}
    \exists v, \forall x \geq 0, f(v - x) \leq g(v - x) \land f(v + x) \geq g(v + x).
  \end{align}
\end{definition}
An example of the splittable functions can be seen in
\figurename~\ref{fig:sampling_distribution_ordered_statistics},
where the conditional sampling distributions $\text{P}_{S_i}(s_i | r_i = 1)$ and $\text{P}_{S_i}(s_i | r_i = 2)$
are roughly splittable at $s_i = 0.05$ or $0.045$.
\begin{theorem}
  \label{theorem:main_contribution}
  For any $i \in \{1,\cdots,m-1\}$, $k \in \{1,\cdots,m-1\}$, and non-negative ranking score $s_i \ge 0$,
  if the conditional PDFs $f(s_i) = {\rm P}_{S_i}(s_i | r_i = k)$ and $g(s_i) = {\rm P}_{S_i}(s_i | r_i = k+1)$ are splittable functions defined in (\ref{eq:splittable}),
  then the expected value of $S_i$ conditional on the rank $r_i$ is relatively large when the rank is higher, i.e.,
  \begin{align}
    \label{eq:main_contribution}
    \mathbb{E} \left[ S_i | r_i = k \right] \ge \mathbb{E} \left[ S_i | r_i = k+1 \right].
  \end{align}
\end{theorem}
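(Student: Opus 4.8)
The plan is to fix an arbitrary $i \in \{1,\dots,m-1\}$ and $k \in \{1,\dots,m-1\}$ and to express the difference of the two conditional expectations as a single integral against the density difference $h(s_i) := f(s_i) - g(s_i)$, then exploit the single-crossing structure that the splittable property encodes. Since the conditional densities are supported on $[0,\infty)$, I would write
\[
\mathbb{E}[S_i \mid r_i = k] - \mathbb{E}[S_i \mid r_i = k+1] = \int_0^\infty s_i\, h(s_i)\, ds_i .
\]
First I would translate the definition in (\ref{eq:splittable}) into a sign statement about $h$: substituting $s_i = v - x$ with $x \ge 0$ gives $f \le g$, hence $h(s_i) \le 0$ on $[0,v)$, while substituting $s_i = v + x$ gives $f \ge g$, hence $h(s_i) \ge 0$ on $(v,\infty)$. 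Thus $h$ is non-positive below the splitting point $v$ and non-negative above it.

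The key step is to subtract a vanishing term. Because $f$ and $g$ are both probability densities on $[0,\infty)$, we have $\int_0^\infty h(s_i)\, ds_i = 0$, so multiplying by the splitting constant $v$ guaranteed by (\ref{eq:splittable}) also gives $v \int_0^\infty h(s_i)\, ds_i = 0$. Subtracting this from the target integral recentres it at $v$:
\[
\mathbb{E}[S_i \mid r_i = k] - \mathbb{E}[S_i \mid r_i = k+1] = \int_0^\infty (s_i - v)\, h(s_i)\, ds_i .
\]
Now the integrand has a fixed sign: for $s_i < v$ both factors $(s_i - v)$ and $h(s_i)$ are non-positive, so their product is non-negative; for $s_i > v$ both are non-negative; and at $s_i = v$ the integrand vanishes. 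Hence the integral is non-negative, which is exactly the claimed inequality (\ref{eq:main_contribution}).

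The boundary behaviour near $s_i = 0$ requires only a brief remark and causes no trouble: if the splitting point satisfies $v > 0$, then points $v - x$ with $v - x < 0$ fall outside the support, where both densities vanish and $f \le g$ holds trivially as $0 \le 0$; the only substantive content therefore lives on $[0,\infty)$. I would also state the mild standing assumption that both conditional expectations are finite, so that the integrals converge and the rearrangement above is legitimate.

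The one point demanding care — rather than a genuine obstacle — is recognising that the raw integral $\int_0^\infty s_i\, h(s_i)\, ds_i$ is not sign-definite on its own, and that the recentring by $v$, made possible precisely because $h$ integrates to zero, is what converts it into an integral of a manifestly non-negative function. Once that recentring is in place the conclusion is immediate and no further estimates are needed; the whole argument is essentially a stochastic-dominance computation driven by the single crossing of $f$ and $g$ at $v$.
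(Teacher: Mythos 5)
Your proposal is correct and is essentially the paper's own argument: your recentred integral $\int_0^\infty (s_i - v)\,h(s_i)\,ds_i \ge 0$ is exactly the paper's pointwise bound $s_i\,[f(s_i)-g(s_i)] \ge v\,[f(s_i)-g(s_i)]$ followed by the observation that $v\int_0^\infty [f(s_i)-g(s_i)]\,ds_i = 0$, just written in a single step rather than split at $v$. The additional remarks on boundary behaviour and finiteness of the conditional expectations are fine but do not change the substance.
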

\begin{proof}
  Let $f(s_i) = \text{P}_{S_i}(s_i | r_i = k)$ and $g(s_i) = \text{P}_{S_i}(s_i | r_i = k+1)$ be the splittable functions in (\ref{eq:splittable}).
  \begin{align*}
    & \mathbb{E} \left[ S_i | r_i = k \right] - \mathbb{E} \left[ S_i | r_i = k+1 \right] \\
    &= \int_0^\infty s_i \left[ f(s_i) - g(s_i) \right] ds_i \\
    &= \int_0^v s_i \left[ f(s_i) - g(s_i) \right] ds_i + \int_v^\infty s_i \left[ f(s_i) - g(s_i) \right] ds_i \\
    &\stackrel{\rm (\ref{eq:splittable})}{\geq} \int^v_0 v \left[ f(s_i) - g(s_i) \right] ds_i + \int_v^\infty v \left[ f(s_i) - g(s_i) \right] ds_i \\
    &= v \int_0^\infty \left[ f(s_i) - g(s_i) \right] ds_i = 0.
  \end{align*}
\end{proof}
This indicates that the realization of conditional ranking score $s_i | \{r_i = k\}$ can be changed by the rank of ad
even if $s_i$ is originally an unbiased estimator.
Since the rank $r_i$ depends on the bid prices for all advertisements,
the auction would be non-truthful as mentioned in \S\ref{sec:introduction}.
Lastly, we show that Theorem~\ref{theorem:main_contribution} can be simplified when the ranking scores are mutually independent -- the splittable assumption can be replaced with the independence assumption in the following theorem.
\begin{theorem}
  \label{theorem:main_contribution_independent}
  For any $i \in \{1,\cdots,m-1\}$, $k \in \{1,\cdots,m-1\}$, and non-negative ranking score $s_i \ge 0$,
  if all of the ranking scores $S_i, \forall i \in \{1,\cdots,m\}$ are mutually independent,
  then the following inequality holds
  \begin{align}
    \label{eq:main_contribution}
    \mathbb{E} \left[ S_i | r_i = k \right] \ge \mathbb{E} \left[ S_i | r_i = k+1 \right].
  \end{align}
\end{theorem}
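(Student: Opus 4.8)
The plan is to reduce Theorem~\ref{theorem:main_contribution_independent} to the already-proved Theorem~\ref{theorem:main_contribution} by showing that mutual independence forces the two conditional densities $f(s_i) = \text{P}_{S_i}(s_i \mid r_i = k)$ and $g(s_i) = \text{P}_{S_i}(s_i \mid r_i = k+1)$ to be splittable in the sense of (\ref{eq:splittable}). First I would rewrite each conditional density by Bayes' rule as $\text{P}_{S_i}(s_i \mid r_i = k) \propto f_i(s_i)\, p_k(s_i)$, where $f_i$ is the marginal density of $S_i$ and $p_k(s_i) = \text{P}(r_i = k \mid S_i = s_i)$. Under independence, the event $\{r_i = k\}$ given $S_i = s_i$ is exactly the event that precisely $k-1$ of the remaining scores exceed $s_i$; hence $p_k(s_i) = \text{P}(N(s_i) = k-1)$, where $N(s_i) = \sum_{j \ne i} \mathbf{1}[S_j > s_i]$ is a Poisson--binomial count whose success probabilities $q_j(s_i) = 1 - F_j(s_i)$ are each non-increasing in $s_i$.

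Writing the difference as $f - g = f_i(s_i)\,[c_1 p_k(s_i) - c_2 p_{k+1}(s_i)]$ with normalizers $c_1 = 1/\text{P}(r_i = k)$ and $c_2 = 1/\text{P}(r_i = k+1)$, and using $f_i \ge 0$, the sign of $f - g$ is governed solely by the likelihood ratio $p_k(s_i)/p_{k+1}(s_i) = \text{P}(N = k-1)/\text{P}(N = k)$ compared with the constant $c_2/c_1$. The heart of the argument is therefore to prove that this ratio is monotone non-decreasing in $s_i$. I would establish this by differentiating $\text{P}(N = n)/\text{P}(N = n+1)$ with respect to a single parameter $q_j$: decomposing $N = \mathbf{1}[S_j > s_i] + M$ with $M$ independent of the $j$-th Bernoulli, a short computation shows the derivative has the sign of $\text{P}(M = n-1)\text{P}(M = n+1) - \text{P}(M = n)^2$, which is $\le 0$ because the Poisson--binomial PMF of $M$ is log-concave (the classical fact that log-concavity of the Bernoulli PMF is preserved under convolution). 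Hence the ratio is decreasing in each $q_j$, and since every $q_j(s_i)$ decreases as $s_i$ grows, the ratio $p_k/p_{k+1}$ is increasing in $s_i$.

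With this monotone likelihood ratio in hand, $f - g$ changes sign at most once, passing from non-positive to non-negative exactly where $p_k/p_{k+1}$ crosses $c_2/c_1$; because both $f$ and $g$ integrate to one, the identity $\int_0^\infty (f - g)\,ds_i = 0$ guarantees that a genuine crossing point $v$ exists (unless $f \equiv g$, in which case the claim is trivial). This single-crossing structure is precisely the splittable property (\ref{eq:splittable}), so Theorem~\ref{theorem:main_contribution} applies and yields $\mathbb{E}[S_i \mid r_i = k] \ge \mathbb{E}[S_i \mid r_i = k+1]$.

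I expect the main obstacle to be the step establishing the monotone likelihood ratio. Its subtleties are that several parameters $q_j(s_i)$ vary simultaneously with $s_i$, which is handled by proving monotonicity in each coordinate separately and composing, that one must invoke log-concavity of a general Poisson--binomial rather than a single Binomial, and that continuity of the score distributions is needed so that the rank is defined almost surely and ties carry zero probability. A secondary point to verify is that the crossing point $v$ lies in the integration domain $[0,\infty)$, which follows from the mean-zero property of $f - g$ together with the single-crossing of the likelihood ratio.
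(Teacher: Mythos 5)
Your proposal is correct, and it reaches the paper's conclusion through a genuinely different central lemma. The paper's proof also begins with Bayes' theorem, but it then (for $k=1$ only) expands $\text{P}_{S_i}(r_i=1|s_i) - \alpha(1)\,\text{P}_{S_i}(r_i=2|s_i)$ explicitly in terms of the CDFs $F_j$ of the competing scores, rearranges it as a difference $\phi(s_i)-\psi(s_i)$ of two monotonically increasing functions, and re-runs the Theorem~\ref{theorem:main_contribution} lower-bound trick inline (replacing $s_i$ by $\max(v,0)$ at the crossing point); the extension to general $k$ is only asserted (``the same technique''), and the reduction to splittability is stated as a remark after the proof rather than used as the proof mechanism. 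You instead prove a monotone likelihood ratio property: writing $\text{P}(r_i=k \mid S_i=s_i) = \text{P}(N(s_i)=k-1)$ for the Poisson--binomial count $N(s_i)$ of competitors exceeding $s_i$, you show $\text{P}(N=k-1)/\text{P}(N=k)$ is non-decreasing in $s_i$ via log-concavity of the Poisson--binomial PMF, which yields the single-crossing (splittable) property for every rank $k$ at once, and then you invoke Theorem~\ref{theorem:main_contribution} directly. Your route buys two things: it treats all $k$ uniformly, closing the gap the paper leaves open, and it supplies a rigorous justification of the single-crossing step --- the paper's appeal to the ``monotonicity of $\phi$ and $\psi$'' is not by itself sufficient, since a difference of two increasing functions can change sign many times; one needs the extra product structure of the CDF terms, which your MLR/log-concavity argument captures cleanly. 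The paper's route, in exchange, is more elementary and self-contained: it needs nothing beyond monotonicity of CDFs, whereas you import the classical fact that log-concavity is preserved under convolution, and you rely on the continuity/no-ties assumption (which, to be fair, the paper also needs but never states) so that the rank-count identity holds almost surely.
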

\begin{proof}
  From Bayes' theorem, we have
  \begin{align}
    \label{eq:bayse_transformation}
    & \mathbb{E} \left[ S_i | r_i = k \right] - \mathbb{E} \left[ S_i | r_i = k+1 \right] \nonumber \\
    &= \int_0^\infty s_i \left[ \text{P}_{S_i}(s_i | r_i = k) - \text{P}_{S_i}(s_i | r_i = k + 1) \right] ds_i \nonumber \\
    &= \int_0^\infty s_i \text{P}_{S_i}(s_i) \left[ \frac{\text{P}_{S_i}(r_i=k|s_i)}{\text{P}_{S_i}(r_i=k)} - \frac{\text{P}_{S_i}(r_i=k+1|s_i)}{\text{P}_{S_i}(r_i=k+1)} \right] ds_i \nonumber \\
    &\varpropto \int_0^\infty s_i \text{P}_{S_i}(s_i) \left[ \text{P}_{S_i}(r_i=k|s_i) - \alpha(k) \text{P}_{S_i}(r_i=k+1|s_i) \right] ds_i,
  \end{align}
  where $\alpha(k) = \frac{\text{P}_{S_i}(r_i=k)}{\text{P}_{S_i}(r_i=k+1)} \ge 0$.
  Note that $\alpha(k)$ does not depend on $s_i$
  because $\text{P}_{S_i}(r_i=k)$ and $\text{P}_{S_i}(r_i=k+1)$ are marginalized for all possible $s_i$.
  The basic idea of the proof is to show that (\ref{eq:bayse_transformation}) can be written as
  \begin{align}
    \label{eq:another_transformation}
    (\ref{eq:bayse_transformation})
    = \int_0^\infty s_i \text{P}_{S_i}(s_i) \left[ \phi(s_i) - \psi(s_i) \right] ds_i,
  \end{align}
  where $\phi(s_i)$ and $\psi(s_i)$ are monotonically increasing functions.
  Moreover, these functions have the property that
  \begin{align}
    \label{eq:equal_property_of_functions}
    \int_0^\infty \text{P}_{S_i}(s_i) \left[ \phi(s_i) - \psi(s_i) \right] ds_i = 0.
  \end{align}
  By combining (\ref{eq:equal_property_of_functions}) with the monotonicity of $\phi(s_i)$ and $\psi(s_i)$,
  the equation (\ref{eq:another_transformation}) can be bounded as
  \begin{align}
    \label{eq:lower_bound_bayse_transformation}
    (\ref{eq:bayse_transformation})
    &= \int_0^\infty s_i \text{P}_{S_i}(s_i) \left[ \phi(s_i) - \psi(s_i) \right] ds_i \nonumber \\
    &\ge \int_0^\infty \max(v, 0) \cdot \text{P}_{S_i}(s_i) \left[ \phi(s_i) - \psi(s_i) \right] ds_i = 0,
  \end{align}
  where $v$ is the cross point satisfying $\phi(v) - \psi(v) = 0$.

  Subsequently, we prove (\ref{eq:another_transformation}) and (\ref{eq:equal_property_of_functions}) as follows.
  Let $F_j(s) = \int_{0}^s \text{P}_{S_j}(x) dx = \text{P}(S_j \le s)$ be CDF (cumulative density function) with respect to the non-negative ranking score of the $j$-th ad.
  Because of the assumption that the ranking scores are mutually independent,
  we have the following joint probabilities:
  \begin{align*}
    \text{P}_{S_i}(r_i=1 | s_i)
    &= \prod_{j \in \{1,\cdots,m\} \setminus \{i\}} F_{j}(s_i), \\
    \text{P}_{S_i}(r_i=2 | s_i)
    &= \sum_{\ell \in \{1,\cdots,m\} \setminus \{i\}} [1 - F_{\ell}(s_i)]
       \prod_{j \in \{1,\cdots,m\} \setminus \{i,\ell\}} F_{j}(s_i), \\
  \end{align*}
  where the summation part $\sum_{\ell \in \{1,\cdots,m\} \setminus \{i\}}$ corresponds to the mutual exclusivity that the multiple advertisements can not be assigned to the rank = 1 simultaneously.
  By using this, we define
  \begin{align*}
    \text{P}_{S_i}(r_i=1 | s_i) - \alpha(1) \text{P}_{S_i}(r_i=2 | s_i) = \phi(s_i) - \psi(s_i)
  \end{align*}
  where $\phi(s_i)$ and $\psi(s_i)$ are monotonic functions defined as
  \begin{align*}
    \phi(s_i) &= \!\!\!\!\! \prod_{j \in \{1,\cdots,m\} \setminus \{i\}} \!\!\!\!\! F_{j}(s_i) ~+~
      \alpha(1) \!\!\!\!\!\!\!\!\! \sum_{\ell \in \{1,\cdots,m\} \setminus \{i\}} \!\!\!\!\! F_{\ell}(s_i)
      \!\!\!\!\! \prod_{j \in \{1,\cdots,m\} \setminus \{i,\ell\}} \!\!\!\!\! F_{j}(s_i), \\
    \psi(s_i) &= \alpha(1) \!\!\!\!\!\!\!\!\! \sum_{\ell \in \{1,\cdots,m\} \setminus \{i\}}
       \prod_{j \in \{1,\cdots,m\} \setminus \{i,\ell\}} F_{j}(s_i).
  \end{align*}
  Since $\int_0^\infty \frac{\text{P}_{S_i}(s_i) \text{P}_{S_i}(r_i=k|s_i)}{\text{P}_{S_i}(r_i=k)} ds_i = 1$ for any $k$ in (\ref{eq:bayse_transformation}),
  we have
  \begin{align*}
    & \int_0^\infty \text{P}_{S_i}(s_i) \left[ \text{P}_{S_i}(r_i=1|s_i) - \alpha(1) \text{P}_{S_i}(r_i=2|s_i) \right] ds_i \\
    &=\int_0^\infty \text{P}_{S_i}(s_i) \left[ \phi(s_i) - \psi(s_i) \right] ds_i = 0,
  \end{align*}
  and thus we arrived at (\ref{eq:equal_property_of_functions}).
  Furthermore, for $k=1$, we arrived at (\ref{eq:another_transformation}) as
  \begin{align*}
    (\ref{eq:bayse_transformation})
    =
    \int_0^\infty s_i \text{P}_{S_i}(s_i) \left[ \phi(s_i) - \psi(s_i) \right] ds_i.
  \end{align*}
  We omit the proof for general $k$ but it can be shown by using the same technique.
\end{proof}
Theorem~\ref{theorem:main_contribution_independent} implies that
the functions $f(s_i) = \text{P}_{S_i}(s_i) \phi(s_i)$ and $g(s_i) = \text{P}_{S_i}(s_i) \psi(s_i)$ shown in (\ref{eq:lower_bound_bayse_transformation}) are the splittable functions,
which means that
Theorem~\ref{theorem:main_contribution_independent} is a specialization of Theorem~\ref{theorem:main_contribution}.

\newpage
\section{Experiments}
\label{sec:experiments}
In this section, we evaluate the effect of selection bias issue on A/B testing for a CTR prediction task.
Our theorem indicates that the effect of the bias can not be canceled in CPC computation
because it depends on the rank of ad.
To show this, we compute the following metrics:
\begin{align}
  \label{eq:relative_calibration}
  C_\text{relative} = \frac{\text{Calibration}_{1}}{\text{Calibration}_\text{rand}},
\end{align}
\begin{align*}
  \text{Calibration}_{1} =
    \frac{
      \sum_{i \in \mathcal{I}_1} \widehat{\text{CTR}}_{i}
    }{
      \sum_{i \in \mathcal{I}_1} \tilde{c}_i
    }, \quad
  \text{Calibration}_\text{rand} =
    \frac{
      \sum_{i \in \mathcal{I}_\text{rand}} \widehat{\text{CTR}}_{i}
    }{
      \sum_{i \in \mathcal{I}_\text{rand}} \tilde{c}_i
    },
\end{align*}
and we explain the definition of $\mathcal{I}_1, \mathcal{I}_\text{rand}$ and $\tilde{c}_i$ step-by-step.
(i) $\mathcal{I}_1$ is the set of indices for displayed advertisements, each rank of which is one.
An example of $\mathcal{I}_1$ is $\mathcal{I}_1 = \{2,2,3\}$
when we have three advertisements: $\text{ad}_1, \text{ad}_2, \text{ad}_3$
where $\text{ad}_2$ is displayed two times and $\text{ad}_3$ is displayed only once.
(ii) $\mathcal{I}_\text{rand}$ is similar to $\mathcal{I}_1$ but the advertisements are randomly selected and displayed.
Since the ranking scores of the advertisements for $\mathcal{I}_\text{rand}$ are relatively small,
we obtain $\mathcal{I}_\text{rand}$ in few cases.
In this paper, we obtain $\mathcal{I}_\text{rand}$ by using $\epsilon$-Greedy algorithm as described later.
(iii) $\tilde{c}_i$ is a click response defined in $\{0, 1\}$. If the $i$-th ad is clicked, then $\tilde{c}_i = 1$, otherwise it is zero.
$\text{C}_{relative}$ is expected to be one if the selection bias does not depend on the rank of ad.
We compute $\text{C}_{relative}$ under the different CTR prediction models as follows.

\subsection{CTR Prediction Models}
We use two different CTR prediction models:
(a) we consider a naive CTR estimator $\widehat{\text{CTR}}_i$ for the $i$-th ad,
which is defined by
\begin{align}
  \label{eq:naive_CTR_estimator}
  \widehat{\text{CTR}}_i(site,pos)
  = \frac{c_i(site,pos)}{n_i(site,pos)},
\end{align}
where $site,pos$ are the site ID of the media depicted in \figurename~\ref{fig:yahoo_news} and the position ID in the site, respectively.
Moreover, $c_i, n_i$ are the number of clicks and impressions for the $i$-th ad, respectively,
which are individually aggregated in each site $\times$ position.
We assume that CTR is in a \emph{steady state} for a few days,
and hence we aggregate $c_i, n_i$ over recent $L$ days;
we experimentally set the window size $L$ as two weeks in our experiment,
and we continually update $c_i, n_i$ until the end of our experiment.
(b) we use the following machine learning based CTR prediction model
\begin{align}
  \label{eq:machine_learning_CTR_predictor}
  \widehat{\text{CTR}}(\bm{x}_\text{user},\bm{x}_{\text{ad}_i}) = \text{P}(click=1|\bm{x}_\text{user},\bm{x}_{\text{ad}_i}; \bm{\Theta}),
\end{align}
where $\bm{\Theta}$ is an optimization parameter,
$\bm{x}_\text{user}$ and $\bm{x}_{\text{ad}_i}$ are the features for a given user and the $i$-th ad.
For concrete example, we can use
\begin{align*}
  \bm{x}_\text{user} &= (\bm{x}_\text{site},\bm{x}_\text{pos},\bm{x}_\text{age},\bm{x}_\text{gender},\bm{x}_\text{access-time}), \\
  \bm{x}_{\text{ad}_i} &= (\widehat{\text{CTR}}_i, c_i, n_i, \bm{x}_{\text{size}_i}, \bm{x}_{\text{style}_i}, \bm{x}_{\text{category}_i}),
\end{align*}
where each vector in $\bm{x}_\text{user}$ is a row vector of categorical features which are represented by one-hot encoding, for instance.
Note that we need not use one-hot encoding if we use a neural network based model whose input layer is a word embedding layer.
To create $\bm{x}_\text{access-time}$, we quantize the user access\footnote{In this paper, the \emph{access} stands for the user request for returning a content of the media.} time to hour and day of the week.
$\widehat{\text{CTR}}_i, c_i, n_i$ are the same as (\ref{eq:naive_CTR_estimator}),
but we use click information before the target label is observed for circumventing the label leakage problem.
The categorical features $\bm{x}_{\text{size}_i}, \bm{x}_{\text{style}_i}, \bm{x}_{\text{category}_i}$
represent schematic informations: size or aspect ratio, style (text, image, video), and category (finance, game, food, and so on).
We use the binary cross entropy loss a.k.a. logarithmic loss to train the optimization parameter $\bm{\Theta}$
which can be updated on a daily basis by using the training data for the latest seven days, for instance.
We omit the details of the features and the machine learning model we used in our experiment,
because it is currently confidential and only for internal use within our company.

The notable difference between the models (a) and (b) is that,
the naive CTR estimator (\ref{eq:naive_CTR_estimator}) is individually computed in each advertisement,
while we train the optimization parameter $\bm{\Theta}$ by using the data for all advertisements;
the latter can be regarded as a multi-task learning.
We checked the logarithmic loss of the two models for prepered test data in advance,
the model (b) was better than (a) about 1.5\% to 2\% (this is empirically not small).
We think that it works well by using a lot of data, although the complexity of model (b) is much greater than (a).

In our experiment, we use $\epsilon$-Greedy algorithm to accumulate the click and impression data.
For each user access,
an advertisement is randomly selected and displayed with a small probability $\epsilon$,
while it is selected by referring ranking scores with the probability $1 - \epsilon$.
The randomly selected advertisements are used in computing $\mathcal{I}_\text{rand}$ as mentioned before.

\subsection{A/B Testing}
To compare the selection bias in each CTR prediction model,
we use an online A/B testing: two independent subsets of users are randomly selected for the comparison, and we call each subset “bucket” with suffix A or B.
We would usually have to divide the advertiser's budgets into small subsets corresponding to each of the buckets, but for this test we do not divide them because the goal of this experiment is to review the selection bias issue in each CTR prediction model -- not to review the behavior of the advertisers.

\subsection{Other Metrics in Online Advertising}
Here, we explain the total utility (TU) of advertisers in our A/B testing,
which can be written as
\begin{align*}
  \text{TU}^{(\text{bucket})} = \sum_{i \in \mathcal{I}_1^{(\text{bucket})}} \tilde{c}_i (v_i - \text{CPC}_i),
\end{align*}
where
$\mathcal{I}_1^{(\text{bucket})}$ is the set of indices for displayed advertisements in the bucket,
$\tilde{c}_i$ is a click response defined in $\{0, 1\}$,
and $v_i$ is the value of click for the advertiser who has the $i$-th ad.
If the auction is truthful, then we can approximately estimate $v_i$ as the original bid price $\text{Bid}_i$
because $\text{Bid}_i$ would be set as the highest price that advertiser is willing to pay for user click.
However, it is not truthful under the selection bias issue, and thus we assume that $\text{Bid}_i \varpropto v_i$.
By using the assumption, we experimentally evaluate the following metrics instead of the total utility:
\begin{align}
  \label{eq:relative_total_value}
  \text{RTV~(relative~total~value)}
  &=
  \frac{
    \sum_{i \in \mathcal{I}_1^{(\text{bucket-B})}} \tilde{c}_i \text{Bid}_i
  }{
    \sum_{i \in \mathcal{I}_1^{(\text{bucket-A})}} \tilde{c}_i \text{Bid}_i
  }, \\
  \label{eq:relative_total_cost}
  \text{RTC~(relative~total~cost)}
  &=
  \frac{
    \sum_{i \in \mathcal{I}_1^{(\text{bucket-B})}} \tilde{c}_i \text{CPC}_i
  }{
    \sum_{i \in \mathcal{I}_1^{(\text{bucket-A})}} \tilde{c}_i \text{CPC}_i
  }.
\end{align}

\subsection{Experimental Results}
The A/B testing was carried out for four weeks, and the data for the last half weeks was used in evaluation
(the first half was excluded).
Table~\ref{tb:experimental_result_calbrations} shows $C_\text{relative}$ defined in (\ref{eq:relative_calibration}),
and we added the Bid-weighted version of them which have $\text{Bid}_i \widehat{\text{CTR}}_{i}$ and $\text{Bid}_i \tilde{c}_i$.
These results imply that the selection bias depends on the rank of ad because $C_\text{relative}$ is greater than one,
moreover, this dependency is relatively small in the machine learning model.
\begin{table}[th]
  \caption{$C_\text{relative}$ in each CTR prediction model.}
  \label{tb:experimental_result_calbrations}
  \centering
  \begin{tabular}{c|c|c|c}
  \hline
  \multicolumn{2}{c|}{naive model in (\ref{eq:naive_CTR_estimator})} & \multicolumn{2}{c}{machine learning model in (\ref{eq:machine_learning_CTR_predictor})} \\ \hline\hline
  non-weighted & Bid-weighted & non-weighted & Bid-weighted \\ \hline
  1.2456 & 1.2972 & 1.1588 & 1.1670 \\ \hline
  \end{tabular}
\end{table}

As for the advertiser utility, we experimentally evaluated RTV (\ref{eq:relative_total_value}) and RTC (\ref{eq:relative_total_cost})
when the bucket-A corresponds to the naive model.
The results are summarized in Table~\ref{tb:experimental_result_RTVC}. RTV was improved by +4\% in the machine learning model, but RTC was also increased.
This means that the machine learning model is robust to the selection bias in CPC computation, and the transaction for online advertising would be more efficient in the machine learning model.
\begin{table}[th]
  \caption{The relative metrics of the total value and cost for advertisers.
  }
  \label{tb:experimental_result_RTVC}
  \centering
  \begin{tabular}{c|c}
  \hline
  RTV (relative total value) & RTC (relative total cost) \\ \hline\hline
  1.0429 & 1.0588 \\ \hline
  \end{tabular}
\end{table}

\section{Conclusion}
\label{sec:conclusion}
In this paper, we analyzed the relation of selection biases of multiple ordered statistics.
Our analysis shows that the selection bias depends on the rank of advertisement --
this means that the selection bias issue can not be ignored in computing CPC (cost per click) which is calculated by the two ordered statistics.

We reviewed the selection bias issue in online A/B testing for a CTR prediction task.
The naive CTR estimator $\widehat{\text{CTR}}_i = \frac{\text{\#~of~clicks}_i}{\text{\#~of~impressions}_i}$ for the $i$-th ad is sensitive to the selection bias, while a machine learning model using multi-task learning is robust to it.
After the ranking in each auction, we suffer from the selection bias issue even if the estimated or predicted CTRs are originally unbiased,
therefore, it is important that the ordered statistics are robust to it.
In future work, we want to obtain such a robust estimator for each auction, but it would be a challenging task because the number of auctions per second in online advertising is extremely huge.

Since the original bid price is basically optimized by a budget allocation algorithm under the budget constraints,
we want to extend our analysis to the case that the bid price is also a random variable;
however, the selection bias issue would be more critical since the number of random variables is increased.

\bibliographystyle{ACM-Reference-Format}
\bibliography{bibliography}


\begin{thebibliography}{34}


\ifx \showCODEN    \undefined \def \showCODEN     #1{\unskip}     \fi
\ifx \showDOI      \undefined \def \showDOI       #1{#1}\fi
\ifx \showISBNx    \undefined \def \showISBNx     #1{\unskip}     \fi
\ifx \showISBNxiii \undefined \def \showISBNxiii  #1{\unskip}     \fi
\ifx \showISSN     \undefined \def \showISSN      #1{\unskip}     \fi
\ifx \showLCCN     \undefined \def \showLCCN      #1{\unskip}     \fi
\ifx \shownote     \undefined \def \shownote      #1{#1}          \fi
\ifx \showarticletitle \undefined \def \showarticletitle #1{#1}   \fi
\ifx \showURL      \undefined \def \showURL       {\relax}        \fi
\providecommand\bibfield[2]{#2}
\providecommand\bibinfo[2]{#2}
\providecommand\natexlab[1]{#1}
\providecommand\showeprint[2][]{arXiv:#2}

\bibitem[\protect\citeauthoryear{Abrams}{Abrams}{2006}]%
        {abrams2006revenue}
\bibfield{author}{\bibinfo{person}{Zo{\"e} Abrams}.}
  \bibinfo{year}{2006}\natexlab{}.
\newblock \showarticletitle{Revenue maximization when bidders have budgets}. In
  \bibinfo{booktitle}{\emph{Proceedings of the seventeenth annual ACM-SIAM
  symposium on Discrete algorithm}}. \bibinfo{pages}{1074--1082}.
\newblock


\bibitem[\protect\citeauthoryear{Aggarwal, Goel, and Motwani}{Aggarwal
  et~al\mbox{.}}{2006}]%
        {aggarwal2006truthful}
\bibfield{author}{\bibinfo{person}{Gagan Aggarwal}, \bibinfo{person}{Ashish
  Goel}, {and} \bibinfo{person}{Rajeev Motwani}.}
  \bibinfo{year}{2006}\natexlab{}.
\newblock \showarticletitle{Truthful auctions for pricing search keywords}. In
  \bibinfo{booktitle}{\emph{Proceedings of the 7th ACM Conference on Electronic
  Commerce}}. \bibinfo{pages}{1--7}.
\newblock


\bibitem[\protect\citeauthoryear{Bang and Robins}{Bang and Robins}{2005}]%
        {bang2005doubly}
\bibfield{author}{\bibinfo{person}{Heejung Bang} {and} \bibinfo{person}{James~M
  Robins}.} \bibinfo{year}{2005}\natexlab{}.
\newblock \showarticletitle{Doubly robust estimation in missing data and causal
  inference models}.
\newblock \bibinfo{journal}{\emph{Biometrics}} \bibinfo{volume}{61},
  \bibinfo{number}{4} (\bibinfo{year}{2005}), \bibinfo{pages}{962--973}.
\newblock


\bibitem[\protect\citeauthoryear{Edelman and Ostrovsky}{Edelman and
  Ostrovsky}{2007}]%
        {edelman2007strategic}
\bibfield{author}{\bibinfo{person}{Benjamin Edelman} {and}
  \bibinfo{person}{Michael Ostrovsky}.} \bibinfo{year}{2007}\natexlab{}.
\newblock \showarticletitle{Strategic bidder behavior in sponsored search
  auctions}.
\newblock \bibinfo{journal}{\emph{Decision support systems}}
  \bibinfo{volume}{43}, \bibinfo{number}{1} (\bibinfo{year}{2007}),
  \bibinfo{pages}{192--198}.
\newblock


\bibitem[\protect\citeauthoryear{Edelman, Ostrovsky, and Schwarz}{Edelman
  et~al\mbox{.}}{2007}]%
        {edelman2007internet}
\bibfield{author}{\bibinfo{person}{Benjamin Edelman}, \bibinfo{person}{Michael
  Ostrovsky}, {and} \bibinfo{person}{Michael Schwarz}.}
  \bibinfo{year}{2007}\natexlab{}.
\newblock \showarticletitle{Internet advertising and the generalized
  second-price auction: Selling billions of dollars worth of keywords}.
\newblock \bibinfo{journal}{\emph{American economic review}}
  \bibinfo{volume}{97}, \bibinfo{number}{1} (\bibinfo{year}{2007}),
  \bibinfo{pages}{242--259}.
\newblock


\bibitem[\protect\citeauthoryear{Feng, Schrijvers, and Sodomka}{Feng
  et~al\mbox{.}}{2019}]%
        {feng2019online}
\bibfield{author}{\bibinfo{person}{Zhe Feng}, \bibinfo{person}{Okke
  Schrijvers}, {and} \bibinfo{person}{Eric Sodomka}.}
  \bibinfo{year}{2019}\natexlab{}.
\newblock \showarticletitle{Online learning for measuring incentive
  compatibility in ad auctions?}. In \bibinfo{booktitle}{\emph{The World Wide
  Web Conference}}. \bibinfo{pages}{2729--2735}.
\newblock


\bibitem[\protect\citeauthoryear{Grigas, Lobos, Wen, and Lee}{Grigas
  et~al\mbox{.}}{2017}]%
        {grigas2017profit}
\bibfield{author}{\bibinfo{person}{Paul Grigas}, \bibinfo{person}{Alfonso
  Lobos}, \bibinfo{person}{Zheng Wen}, {and} \bibinfo{person}{Kuang-chih Lee}.}
  \bibinfo{year}{2017}\natexlab{}.
\newblock \showarticletitle{Profit maximization for online advertising
  demand-side platforms}.
\newblock In \bibinfo{booktitle}{\emph{Proceedings of the ADKDD'17}}.
\newblock


\bibitem[\protect\citeauthoryear{Gumbel}{Gumbel}{1954}]%
        {gumbel1954statistical}
\bibfield{author}{\bibinfo{person}{Emil~Julius Gumbel}.}
  \bibinfo{year}{1954}\natexlab{}.
\newblock \showarticletitle{Statistical theory of extreme values and some
  practical applications}.
\newblock \bibinfo{journal}{\emph{NBS Applied Mathematics Series}}
  \bibinfo{volume}{33} (\bibinfo{year}{1954}).
\newblock


\bibitem[\protect\citeauthoryear{Han and Gabor}{Han and Gabor}{2020}]%
        {han2020contextual}
\bibfield{author}{\bibinfo{person}{Benjamin Han} {and} \bibinfo{person}{Jared
  Gabor}.} \bibinfo{year}{2020}\natexlab{}.
\newblock \showarticletitle{Contextual Bandits for Advertising Budget
  Allocation}.
\newblock In \bibinfo{booktitle}{\emph{Proceedings of the ADKDD'17}}.
\newblock


\bibitem[\protect\citeauthoryear{Hatano, Fukunaga, and Kawarabayashi}{Hatano
  et~al\mbox{.}}{2016}]%
        {hatano2016adaptive}
\bibfield{author}{\bibinfo{person}{Daisuke Hatano}, \bibinfo{person}{Takuro
  Fukunaga}, {and} \bibinfo{person}{Ken-ichi Kawarabayashi}.}
  \bibinfo{year}{2016}\natexlab{}.
\newblock \showarticletitle{Adaptive Budget Allocation for Maximizing Influence
  of Advertisements.}. In \bibinfo{booktitle}{\emph{IJCAI}}.
  \bibinfo{pages}{3600--3608}.
\newblock


\bibitem[\protect\citeauthoryear{He, Pan, Jin, Xu, Liu, Xu, Shi, Atallah,
  Herbrich, Bowers, et~al\mbox{.}}{He et~al\mbox{.}}{2014}]%
        {he2014practical}
\bibfield{author}{\bibinfo{person}{Xinran He}, \bibinfo{person}{Junfeng Pan},
  \bibinfo{person}{Ou Jin}, \bibinfo{person}{Tianbing Xu}, \bibinfo{person}{Bo
  Liu}, \bibinfo{person}{Tao Xu}, \bibinfo{person}{Yanxin Shi},
  \bibinfo{person}{Antoine Atallah}, \bibinfo{person}{Ralf Herbrich},
  \bibinfo{person}{Stuart Bowers}, {et~al\mbox{.}}}
  \bibinfo{year}{2014}\natexlab{}.
\newblock \showarticletitle{Practical lessons from predicting clicks on ads at
  facebook}. In \bibinfo{booktitle}{\emph{Proceedings of the Eighth
  International Workshop on Data Mining for Online Advertising}}.
  \bibinfo{pages}{1--9}.
\newblock


\bibitem[\protect\citeauthoryear{Heckman}{Heckman}{1979}]%
        {heckman1979sample}
\bibfield{author}{\bibinfo{person}{James~J Heckman}.}
  \bibinfo{year}{1979}\natexlab{}.
\newblock \showarticletitle{Sample selection bias as a specification error}.
\newblock \bibinfo{journal}{\emph{Econometrica: Journal of the econometric
  society}} (\bibinfo{year}{1979}), \bibinfo{pages}{153--161}.
\newblock


\bibitem[\protect\citeauthoryear{Jin, Song, Li, Gai, Wang, and Zhang}{Jin
  et~al\mbox{.}}{2018}]%
        {jin2018real}
\bibfield{author}{\bibinfo{person}{Junqi Jin}, \bibinfo{person}{Chengru Song},
  \bibinfo{person}{Han Li}, \bibinfo{person}{Kun Gai}, \bibinfo{person}{Jun
  Wang}, {and} \bibinfo{person}{Weinan Zhang}.}
  \bibinfo{year}{2018}\natexlab{}.
\newblock \showarticletitle{Real-time bidding with multi-agent reinforcement
  learning in display advertising}. In \bibinfo{booktitle}{\emph{Proceedings of
  the 27th ACM International Conference on Information and Knowledge
  Management}}. \bibinfo{pages}{2193--2201}.
\newblock


\bibitem[\protect\citeauthoryear{Johari and Tsitsiklis}{Johari and
  Tsitsiklis}{2004}]%
        {johari2004efficiency}
\bibfield{author}{\bibinfo{person}{Ramesh Johari} {and} \bibinfo{person}{John~N
  Tsitsiklis}.} \bibinfo{year}{2004}\natexlab{}.
\newblock \showarticletitle{Efficiency loss in a network resource allocation
  game}.
\newblock \bibinfo{journal}{\emph{Mathematics of Operations Research}}
  \bibinfo{volume}{29}, \bibinfo{number}{3} (\bibinfo{year}{2004}),
  \bibinfo{pages}{407--435}.
\newblock


\bibitem[\protect\citeauthoryear{Karande, Mehta, and Srikant}{Karande
  et~al\mbox{.}}{2013}]%
        {karande2013optimizing}
\bibfield{author}{\bibinfo{person}{Chinmay Karande}, \bibinfo{person}{Aranyak
  Mehta}, {and} \bibinfo{person}{Ramakrishnan Srikant}.}
  \bibinfo{year}{2013}\natexlab{}.
\newblock \showarticletitle{Optimizing budget constrained spend in search
  advertising}. In \bibinfo{booktitle}{\emph{Proceedings of the sixth ACM
  international conference on Web search and data mining}}.
  \bibinfo{pages}{697--706}.
\newblock


\bibitem[\protect\citeauthoryear{Katoh and Ibaraki}{Katoh and Ibaraki}{1998}]%
        {katoh1998resource}
\bibfield{author}{\bibinfo{person}{Naoki Katoh} {and}
  \bibinfo{person}{Toshihide Ibaraki}.} \bibinfo{year}{1998}\natexlab{}.
\newblock \showarticletitle{Resource allocation problems}.
\newblock In \bibinfo{booktitle}{\emph{Handbook of combinatorial
  optimization}}. \bibinfo{publisher}{Springer}, \bibinfo{pages}{905--1006}.
\newblock


\bibitem[\protect\citeauthoryear{Lee, Sun, Sun, and Taylor}{Lee
  et~al\mbox{.}}{2016}]%
        {lee2016exact}
\bibfield{author}{\bibinfo{person}{Jason~D Lee}, \bibinfo{person}{Dennis~L
  Sun}, \bibinfo{person}{Yuekai Sun}, {and} \bibinfo{person}{Jonathan~E
  Taylor}.} \bibinfo{year}{2016}\natexlab{}.
\newblock \showarticletitle{Exact post-selection inference, with application to
  the lasso}.
\newblock \bibinfo{journal}{\emph{The Annals of Statistics}}
  \bibinfo{volume}{44}, \bibinfo{number}{3} (\bibinfo{year}{2016}),
  \bibinfo{pages}{907--927}.
\newblock


\bibitem[\protect\citeauthoryear{Lee and Taylor}{Lee and Taylor}{2014}]%
        {lee2014exact}
\bibfield{author}{\bibinfo{person}{Jason~D Lee} {and}
  \bibinfo{person}{Jonathan~E Taylor}.} \bibinfo{year}{2014}\natexlab{}.
\newblock \showarticletitle{Exact post model selection inference for marginal
  screening}. In \bibinfo{booktitle}{\emph{Proceedings of the 27th
  International Conference on Neural Information Processing Systems-Volume 1}}.
  \bibinfo{pages}{136--144}.
\newblock


\bibitem[\protect\citeauthoryear{Lee and Shen}{Lee and Shen}{2018}]%
        {lee2018winner}
\bibfield{author}{\bibinfo{person}{Minyong~R Lee} {and} \bibinfo{person}{Milan
  Shen}.} \bibinfo{year}{2018}\natexlab{}.
\newblock \showarticletitle{Winner's curse: Bias estimation for total effects
  of features in online controlled experiments}. In
  \bibinfo{booktitle}{\emph{Proceedings of the 24th ACM SIGKDD International
  Conference on Knowledge Discovery \& Data Mining}}.
  \bibinfo{pages}{491--499}.
\newblock


\bibitem[\protect\citeauthoryear{Levin}{Levin}{2004}]%
        {levin2004auction}
\bibfield{author}{\bibinfo{person}{Jonathan Levin}.}
  \bibinfo{year}{2004}\natexlab{}.
\newblock \showarticletitle{Auction theory}.
\newblock \bibinfo{journal}{\emph{Manuscript available at www. stanford.
  edu/jdlevin/Econ}}  \bibinfo{volume}{20286} (\bibinfo{year}{2004}).
\newblock


\bibitem[\protect\citeauthoryear{Maehara, Yabe, and Kawarabayashi}{Maehara
  et~al\mbox{.}}{2015}]%
        {maehara2015budget}
\bibfield{author}{\bibinfo{person}{Takanori Maehara}, \bibinfo{person}{Akihiro
  Yabe}, {and} \bibinfo{person}{Ken-ichi Kawarabayashi}.}
  \bibinfo{year}{2015}\natexlab{}.
\newblock \showarticletitle{Budget allocation problem with multiple
  advertisers: A game theoretic view}. In
  \bibinfo{booktitle}{\emph{International Conference on Machine Learning}}.
  PMLR, \bibinfo{pages}{428--437}.
\newblock


\bibitem[\protect\citeauthoryear{Nie, Tian, Taylor, and Zou}{Nie
  et~al\mbox{.}}{2018}]%
        {nie2018adaptively}
\bibfield{author}{\bibinfo{person}{Xinkun Nie}, \bibinfo{person}{Xiaoying
  Tian}, \bibinfo{person}{Jonathan Taylor}, {and} \bibinfo{person}{James Zou}.}
  \bibinfo{year}{2018}\natexlab{}.
\newblock \showarticletitle{Why adaptively collected data have negative bias
  and how to correct for it}. In \bibinfo{booktitle}{\emph{International
  Conference on Artificial Intelligence and Statistics}}. PMLR,
  \bibinfo{pages}{1261--1269}.
\newblock


\bibitem[\protect\citeauthoryear{Pearl}{Pearl}{2009}]%
        {pearl2009causal}
\bibfield{author}{\bibinfo{person}{Judea Pearl}.}
  \bibinfo{year}{2009}\natexlab{}.
\newblock \showarticletitle{Causal inference in statistics: An overview}.
\newblock \bibinfo{journal}{\emph{Statistics surveys}}  \bibinfo{volume}{3}
  (\bibinfo{year}{2009}), \bibinfo{pages}{96--146}.
\newblock


\bibitem[\protect\citeauthoryear{Robins, Rotnitzky, and Zhao}{Robins
  et~al\mbox{.}}{1994}]%
        {robins1994estimation}
\bibfield{author}{\bibinfo{person}{James~M Robins}, \bibinfo{person}{Andrea
  Rotnitzky}, {and} \bibinfo{person}{Lue~Ping Zhao}.}
  \bibinfo{year}{1994}\natexlab{}.
\newblock \showarticletitle{Estimation of regression coefficients when some
  regressors are not always observed}.
\newblock \bibinfo{journal}{\emph{Journal of the American statistical
  Association}} \bibinfo{volume}{89}, \bibinfo{number}{427}
  (\bibinfo{year}{1994}), \bibinfo{pages}{846--866}.
\newblock


\bibitem[\protect\citeauthoryear{Saito}{Saito}{2020}]%
        {saito2020doubly}
\bibfield{author}{\bibinfo{person}{Yuta Saito}.}
  \bibinfo{year}{2020}\natexlab{}.
\newblock \showarticletitle{Doubly robust estimator for ranking metrics with
  post-click conversions}. In \bibinfo{booktitle}{\emph{Fourteenth ACM
  Conference on Recommender Systems}}. \bibinfo{pages}{92--100}.
\newblock


\bibitem[\protect\citeauthoryear{Shi, Zhang, and Qin}{Shi
  et~al\mbox{.}}{2015}]%
        {shi2015faster}
\bibfield{author}{\bibinfo{person}{Cong Shi}, \bibinfo{person}{Huanan Zhang},
  {and} \bibinfo{person}{Chao Qin}.} \bibinfo{year}{2015}\natexlab{}.
\newblock \showarticletitle{A faster algorithm for the resource allocation
  problem with convex cost functions}.
\newblock \bibinfo{journal}{\emph{Journal of Discrete Algorithms}}
  \bibinfo{volume}{34} (\bibinfo{year}{2015}), \bibinfo{pages}{137--146}.
\newblock


\bibitem[\protect\citeauthoryear{Soma, Kakimura, Inaba, and Kawarabayashi}{Soma
  et~al\mbox{.}}{2014}]%
        {soma2014optimal}
\bibfield{author}{\bibinfo{person}{Tasuku Soma}, \bibinfo{person}{Naonori
  Kakimura}, \bibinfo{person}{Kazuhiro Inaba}, {and} \bibinfo{person}{Ken-ichi
  Kawarabayashi}.} \bibinfo{year}{2014}\natexlab{}.
\newblock \showarticletitle{Optimal budget allocation: Theoretical guarantee
  and efficient algorithm}. In \bibinfo{booktitle}{\emph{International
  Conference on Machine Learning}}. PMLR, \bibinfo{pages}{351--359}.
\newblock


\bibitem[\protect\citeauthoryear{Staib and Jegelka}{Staib and Jegelka}{2017}]%
        {staib2017robust}
\bibfield{author}{\bibinfo{person}{Matthew Staib} {and}
  \bibinfo{person}{Stefanie Jegelka}.} \bibinfo{year}{2017}\natexlab{}.
\newblock \showarticletitle{Robust budget allocation via continuous submodular
  functions}. In \bibinfo{booktitle}{\emph{International Conference on Machine
  Learning}}. PMLR, \bibinfo{pages}{3230--3240}.
\newblock


\bibitem[\protect\citeauthoryear{Tibshirani, Taylor, Lockhart, and
  Tibshirani}{Tibshirani et~al\mbox{.}}{2016}]%
        {tibshirani2016exact}
\bibfield{author}{\bibinfo{person}{Ryan~J Tibshirani},
  \bibinfo{person}{Jonathan Taylor}, \bibinfo{person}{Richard Lockhart}, {and}
  \bibinfo{person}{Robert Tibshirani}.} \bibinfo{year}{2016}\natexlab{}.
\newblock \showarticletitle{Exact post-selection inference for sequential
  regression procedures}.
\newblock \bibinfo{journal}{\emph{J. Amer. Statist. Assoc.}}
  \bibinfo{volume}{111}, \bibinfo{number}{514} (\bibinfo{year}{2016}),
  \bibinfo{pages}{600--620}.
\newblock


\bibitem[\protect\citeauthoryear{Vasile, Lefortier, and Chapelle}{Vasile
  et~al\mbox{.}}{2017}]%
        {vasile2017cost}
\bibfield{author}{\bibinfo{person}{Flavian Vasile}, \bibinfo{person}{Damien
  Lefortier}, {and} \bibinfo{person}{Olivier Chapelle}.}
  \bibinfo{year}{2017}\natexlab{}.
\newblock \showarticletitle{Cost-sensitive learning for utility optimization in
  online advertising auctions}.
\newblock In \bibinfo{booktitle}{\emph{Proceedings of the ADKDD'17}}.
\newblock


\bibitem[\protect\citeauthoryear{Xu, Craiu, and Sun}{Xu et~al\mbox{.}}{2011}]%
        {xu2011bayesian}
\bibfield{author}{\bibinfo{person}{Lizhen Xu}, \bibinfo{person}{Radu~V Craiu},
  {and} \bibinfo{person}{Lei Sun}.} \bibinfo{year}{2011}\natexlab{}.
\newblock \showarticletitle{Bayesian methods to overcome the winner's curse in
  genetic studies}.
\newblock \bibinfo{journal}{\emph{The Annals of Applied Statistics}}
  (\bibinfo{year}{2011}), \bibinfo{pages}{201--231}.
\newblock


\bibitem[\protect\citeauthoryear{Zhang, Yuan, and Wang}{Zhang
  et~al\mbox{.}}{2014}]%
        {zhang2014optimal}
\bibfield{author}{\bibinfo{person}{Weinan Zhang}, \bibinfo{person}{Shuai Yuan},
  {and} \bibinfo{person}{Jun Wang}.} \bibinfo{year}{2014}\natexlab{}.
\newblock \showarticletitle{Optimal real-time bidding for display advertising}.
  In \bibinfo{booktitle}{\emph{Proceedings of the 20th ACM SIGKDD international
  conference on Knowledge discovery and data mining}}.
  \bibinfo{pages}{1077--1086}.
\newblock


\bibitem[\protect\citeauthoryear{Zhu, Jin, Tan, Pan, Zeng, Li, and Gai}{Zhu
  et~al\mbox{.}}{2017}]%
        {zhu2017optimized}
\bibfield{author}{\bibinfo{person}{Han Zhu}, \bibinfo{person}{Junqi Jin},
  \bibinfo{person}{Chang Tan}, \bibinfo{person}{Fei Pan},
  \bibinfo{person}{Yifan Zeng}, \bibinfo{person}{Han Li}, {and}
  \bibinfo{person}{Kun Gai}.} \bibinfo{year}{2017}\natexlab{}.
\newblock \showarticletitle{Optimized cost per click in taobao display
  advertising}. In \bibinfo{booktitle}{\emph{Proceedings of the 23rd ACM SIGKDD
  International Conference on Knowledge Discovery and Data Mining}}.
  \bibinfo{pages}{2191--2200}.
\newblock


\bibitem[\protect\citeauthoryear{Z{\"o}llner and Pritchard}{Z{\"o}llner and
  Pritchard}{2007}]%
        {zollner2007overcoming}
\bibfield{author}{\bibinfo{person}{Sebastian Z{\"o}llner} {and}
  \bibinfo{person}{Jonathan~K Pritchard}.} \bibinfo{year}{2007}\natexlab{}.
\newblock \showarticletitle{Overcoming the winner’s curse: estimating
  penetrance parameters from case-control data}.
\newblock \bibinfo{journal}{\emph{The American Journal of Human Genetics}}
  \bibinfo{volume}{80}, \bibinfo{number}{4} (\bibinfo{year}{2007}),
  \bibinfo{pages}{605--615}.
\newblock


\end{thebibliography}

\end{document}